\newtheorem{theorem}{Theorem}[section]
\newtheorem{corollary}{Corollary}[theorem]
\newtheorem{lemma}{Lemma}[theorem]
\theoremstyle{definition}
\newtheorem{definition}{Definition}[section]
\newcommand{\x}{\mathbf{x}}
\newcommand{\y}{\mathbf{y}}
\newcommand{\A}{\mathbf{A}}
\newcommand{\w}{\mathbf{w}}
\newcommand{\X}{\mathbf{X}}
\newcommand{\ramp}{\mathbf{r}}
\newcommand{\Z}{\mathbf{Z}}
\begin{document}
\title{An Analysis of State Evolution for Approximate Message Passing with Side Information} 

\author{%
  \IEEEauthorblockN{Hangjin Liu}
  \IEEEauthorblockA{NC State University \\
                    Email: hliu25@ncsu.edu}
 \and
 \IEEEauthorblockN{Cynthia Rush}
  \IEEEauthorblockA{Columbia University\\
                    Email: cynthia.rush@columbia.edu}
 \and
 \IEEEauthorblockN{Dror Baron}
  \IEEEauthorblockA{NC State University\\
                    Email: barondror@ncsu.edu}
}

\maketitle
\begin{abstract}
A common goal in many research areas is to reconstruct an unknown signal $\x$ from noisy linear measurements.  Approximate message passing (AMP) is a class of low-complexity algorithms for efficiently solving such high-dimensional regression tasks.  Often, it is the case that side information (SI) is available during reconstruction.  For this reason a novel algorithmic framework that incorporates SI into AMP, referred to as approximate message passing with side information (AMP-SI), has been recently introduced.   An attractive feature of AMP is that when the elements of the signal are exchangeable, the entries of the measurement matrix are independent and identically distributed (i.i.d.) Gaussian, and the denoiser applies the same non-linearity at each entry, the performance of AMP can be predicted accurately by a scalar iteration referred to as state evolution (SE).  However, the AMP-SI framework uses different entry-wise scalar denoisers, based on the entry-wise level of the SI, and therefore is not supported by the standard AMP theory. In this work, we provide rigorous performance guarantees for AMP-SI when the input signal and SI are drawn i.i.d.\ according to some joint distribution subject to finite moment constraints. Moreover, we provide numerical examples to support the theory which demonstrate empirically that the SE can predict the AMP-SI mean square error accurately.
\end{abstract}

\section{Introduction}
High-dimensional linear regression is a well-studied model being used in many applications including compressed sensing\cite{DMM2009}, 
imaging\cite{Arguello2011}, and machine learning and statistics\cite{Hastie2001}. 
The unknown signal $\x \in\mathbb{R}^n$ is viewed through the linear model:
\begin{equation}\label{eq:1-1}
\y=\A \x+ \w,
\end{equation}
where $\mathbf{y}\in\mathbb{R}^m$ are the measurements, $\A\in\mathbb {R}^{m\times n}$ is a known measurement matrix, and  $\w\in \mathbb R^m$ is measurement noise. The goal is to estimate the unknown signal $\x$ having knowledge only of the noisy measurements $\y$ and the measurement matrix $\A$.  When the problem is under-determined (i.e., $m<n$), in order for reconstruction to be successful, it is necessary to exploit structural or probabilistic characteristics of the input signal $\x$.  Often a prior distribution on the input signal $\x$ is assumed, and in this case approximate message passing (AMP) algorithms\cite{DMM2009} can be used for the reconstruction task.

AMP~\cite{DMM2009, RanganGAMP2010} is a class of low-complexity algorithms for efficiently solving high-dimensional regression tasks (\ref{eq:1-1}). AMP works by iteratively generating estimates of the unknown input vector,  $\x$, using a possibly non-linear denoiser function tailored to any prior knowledge about $\x$. 
One favorable feature of AMP is that under some technical conditions on the measurement matrix $\A$ and $\x$, the observations at each iteration of the algorithm are almost surely equal in distribution to $\x$ plus independent and identically distributed (i.i.d.) Gaussian noise in the large system limit.

\textbf{AMP with Side Information (AMP-SI):} In information theory~\cite{Cover06}, when different communication systems share side information (SI), overall communication can become more efficient. Recently~\cite{Dror2017, Ma2018}, a novel algorithmic framework, referred to as AMP-SI, has been introduced for incorporating SI into AMP for high-dimensional regression tasks (\ref{eq:1-1}).  AMP-SI has been empirically demonstrated to have good reconstruction quality and is easy to use. For example, we have proposed to use AMP-SI for channel estimation in emerging millimeter wave communication systems~\cite{saleh1987statistical}, where the time dynamics of the channel structure allow previous channel estimates to be used as SI when estimating the current channel structure~\cite{Ma2018}.

We model the observed SI, denoted by ${{\widetilde{\x}}\in\mathbb{R}^n}$,  as depending statistically on the unknown signal $\x$ through some joint probability density function (pdf), $f({\X},\widetilde{{\X}})$. AMP-SI uses a conditional denoiser, $g_t:\mathbb{R}^{2n}\rightarrow \mathbb{R}^n$, to incorporate SI,
\begin{equation}\label{eq:eta_2}
g_t(\mathbf{a}, \mathbf{b})=\mathbb{E}[\X | \X+\lambda_t\mathcal{N}(0,
\mathbb{I}_n)=\mathbf{a}, \widetilde \X= \mathbf{b}].
\end{equation}

The AMP-SI algorithm iteratively updates estimates of the input signal $\x$: let $\x^0  = \mathbf{0}$, the all-zeros vector, then
\begin{align}
\ramp^t &=\y-\A\x^t+\frac{\ramp^{t-1}}{\delta} [ \text{div} \, g_{t-1}(\x^{t-1}+\A^T\ramp^{t-1}, \widetilde \x)], \label{eq:1-5} \\
 \x^{t+1}&={g_{t}}(\x^{t}+\A^T\ramp^{t}, \widetilde \x), \label{eq:1-6}
\end{align}
where $\x^t\in \mathbb{R}^n$ is the estimate of $\x$ at iteration $t$ and $\delta=\frac{m}{n}$ is the measurement rate. For a differential function $g: \mathbb{R}^{2n} \rightarrow \mathbb{R}^n$ we use $\text{div} g(\mathbf{a}, \mathbf{b}) = \sum_{i=1}^n \frac{\partial g_i}{\partial a_i}(\mathbf{a}, \mathbf{b})$. Using the denoiser in \eqref{eq:eta_2}, the AMP-SI algorithm~\eqref{eq:1-5}-\eqref{eq:1-6} provides the minimum mean squared error (MMSE) estimate of the signal when SI  $\widetilde{\x}$ is available~\cite{Dror2017}.

\textbf{State Evolution (SE):} It has been proven that the performance of AMP, as measured, for example, by the normalized squared $\ell_2$-error $\frac{1}{n}||\x^t-\x||_2^2$ between the estimate  $\x^t$ and true signal $\x$, can be accurately  predicted by a scalar recursion referred as SE\cite{Bayati2011,Rush_Finite18} when the measurement matrix $\A$ is i.i.d.\ Gaussian under various assumptions on the elements of the signal. The SE equation for AMP-SI is as follows. Assume the entries of the noise $\w$ are i.i.d.\ $\sim f(W)$ with $\sigma_w^2 = \mathbb{E}[W^2]$, and let $\lambda_0 = \sigma_w^2 + \mathbb{E}[||\X||^2]/n\delta$. Then for $t \geq 0$,
\begin{equation}
\lambda_t^2 = \sigma_w^2 + \frac{1}{\delta n}\mathbb{E}\left[||g_{t-1}(\X + \lambda_{t-1}\Z, \widetilde{\X}) - \X||^2\right],
\label{eq:SE2}
\end{equation}
where $(\X, \widetilde{\X}) \sim f(\X, \widetilde{\X})$ are independent of $\Z\sim \mathcal{N}(0,\mathbb{I}_n)$, where we use $\mathcal{N}(\mu, \sigma^2)$ to denote a Gaussian distribution with mean $\mu$ and variance $\sigma^2$.  

Considering AMP-SI~\eqref{eq:1-5}-\eqref{eq:1-6}, however, we cannot directly apply the existing AMP theoretical results~\cite{Bayati2011,Rush_Finite18}, as the conditional denoiser~\eqref{eq:eta_2} depends on the index $i$ through the SI, meaning that different scalar denoisers  will be used at different indices within the AMP-SI iterations.  Recent results~\cite{Berthier2017}, however, extend the asymptotic SE analysis to a larger class of possible denoisers, allowing, for example, each element of the input to use a different non-linear denoiser as is the case in AMP-SI.  We employ these results to rigorously relate the SE presented in \eqref{eq:SE2} to the AMP-SI algorithm in~\eqref{eq:1-5}-\eqref{eq:1-6}.

{\bf Related Work:}
While integrating SI into reconstruction algorithms is not new, AMP-SI introduces a unified framework within AMP supporting arbitrary signal and SI dependencies.
Prior work using SI has been either heuristic, limited 
to specific applications, or outside the AMP framework. 

For example, Wang and Liang~\cite{wang2015approximate} integrate SI into AMP for a specific signal prior density, but the method is difficult to apply to other signal models.  Ziniel and Schniter~\cite{DCSAMP} develop an AMP-based reconstruction algorithm for a
time-varying signal model based on Markov processes for the support and amplitude.  This signal model is easily incorporated into the AMP-SI framework as discussed in the analysis of the birth-death-drift model of~\cite{Dror2017, Ma2018}.  Manoel et al. implement an AMP-based algorithm in which
the input signal is repeatedly reconstructed in a streaming fashion,
and information from past reconstruction attempts is aggregated into a prior,
thus improving ongoing reconstruction results~\cite{manoel2017streaming}.
This reconstruction scheme resembles that of AMP-SI, in particular when the Bernoulli-Gaussian model is used (see Section~\ref{subsec:examples}).

{\bf Contribution and Outline:} Ma et al.\ use numerical experiments to show that SE~\eqref{eq:SE2} accurately tracks the performance of AMP-SI~\eqref{eq:1-5}-\eqref{eq:1-6}~\cite{Ma2018}, as was shown rigorously for standard AMP. Ma et al.\ conjecture that rigorous theoretical guarantees can be given for AMP-SI as well~\cite{Ma2018}. In this work, we analyze AMP-SI performance when the input signal and SI are drawn i.i.d.\ according to a general pdf $f(\X, \widetilde{\X})$ obeying some finite moment conditions, the AMP-SI denoiser~\eqref{eq:eta_2} is Lipschitz, and the measurement matrix $\A$ is i.i.d.\ Gaussian.

In Section~\ref{main_result}, we give the main results, examples for various signal and SI models, and numerical experiments comparing the empirical performance of AMP-SI and the SE predictions. The proof of our main theorem is provided in Section~\ref{main_proof}.

\section{Main Results}\label{main_result}
\subsection {Main Theorem}
Our main result provides AMP-SI performance guarantees when considering \textit{pseudo-Lipschitz} loss functions, which we define in the following.
\begin{definition}
\label{def:PLfunc}
\textbf{\emph{Pseudo-Lipschitz functions}}~\cite{Berthier2017}: For $k\in \mathbb{N}_{>0}$ and any $n\in \mathbb{N}_{>0}$, a function $\phi : \mathbb{R}^n\to
\mathbb{R}$ is \emph{pseudo-Lipschitz of order $k$}, or \emph{PL(k)}, if there exists a constant $L$, referred to as the pseudo-Lipschitz constant of $\phi$, such that for $\x, \y \in \mathbb{R}^n$
\begin{equation*} 
\left|\phi(\x)-\phi(\y)\right|\leq L\Big(1+ \Big(\frac{||\x||}{\sqrt{n}}\Big)^{k-1}+ \Big(\frac{||\y||}{\sqrt{n}}\Big)^{k-1}\Big) \frac{||\x-\y||}{\sqrt{n}}.
\end{equation*}
For $k = 1$, this definition coincides with the standard definition of a Lipschitz function. %

A sequence (in $n$) of PL(k) functions $\{\phi_n\}_{n\in \mathbb{N}_{>0}}$
is called \emph{uniformly pseudo-Lipschitz}
of order $k$ , or \emph{uniformly PL(k)}, if, denoting by $L_n$ the pseudo-Lipschitz constant of $\phi_n$, we have $L_n < \infty$ for each $n$ and $\lim\sup_{n\to\infty}L_n < \infty$.
\end{definition}
Throughout the work, $||\cdot||$ denotes the  Euclidean norm, and $\overset{p}{=}$ denotes convergence in probability.
In the case of $(\X, \widetilde{\X})$ sampled i.i.d.\ $f(X, \widetilde{X})$ the AMP-SI denoiser (originally defined in \eqref{eq:eta_2}) is separable: define $\eta_t: \mathbb{R}^2 \rightarrow \mathbb{R}$, as
\begin{equation}\label{eq:eta_2_iid}
\eta_t(a, b)=\mathbb{E}[X | X+\lambda_t\mathcal{N}(0,
1)=a, \widetilde X= b],
\end{equation}
and the AMP-SI algorithm in \eqref{eq:1-5}-\eqref{eq:1-6} simplifies to
\begin{align}
&\ramp^t =\y-\A\x^t+\frac{\ramp^{t-1}}{\delta} \sum_{i=1}^n \eta'_{t-1}([\x^{t-1}+\A^T\ramp^{t-1}]_i, \widetilde x_i), \label{eq:1-5_iid} \\
& x_i^{t+1} ={\eta_{t}}([\x^{t}+\A^T\ramp^{t}]_i, \widetilde x_i), \quad \text{ for } i = 1, 2, \ldots, n, \label{eq:1-6_iid}
\end{align}
where the derivative $\eta_t'(s, \cdot) = \frac{\partial}{\partial s} \eta_t(s, \cdot)$. For the denoiser in \eqref{eq:eta_2_iid}, the SE is as follows: let $\lambda_0 = \sigma_w^2 + \mathbb{E}[X^2]/\delta$ and for $t \geq 0$,
\begin{equation}
\lambda_t^2 = \sigma_w^2 + \frac{1}{\delta}\mathbb{E}\left[(\eta_{t-1}(X + \lambda_{t-1}Z, \widetilde{X}) - X)^2\right],
\label{eq:SE2_iid}
\end{equation}
where $(X, \widetilde{X}) \sim f(X, \widetilde{X})$ are independent of $Z\sim \mathcal{N}(0,1)$.
\begin{theorem}
\label{thm:SE}
For any PL(2) functions $\phi: \mathbb{R}^{2} \rightarrow \mathbb{R}$ and $\psi: \mathbb{R}^{3} \rightarrow \mathbb{R}$, define sequences of  functions $\phi_m: \mathbb{R}^{2m} \rightarrow \mathbb{R}$ and $\psi_n: \mathbb{R}^{3n} \rightarrow \mathbb{R}$ as follows: for vectors $\mathbf{a}, \mathbf{b} \in \mathbb{R}^m$ and $\x, \y, \widetilde{\x} \in \mathbb{R}^n$,
\begin{equation}
\begin{split}
 \phi_m(\mathbf{a}, \mathbf{b}) &:= \frac{1}{m} \sum_{i=1}^m \phi(a_i, b_i) \\
  \psi_n(\x, \y,\widetilde{\x}) &:= \frac{1}{n} \sum_{i=1}^n \psi(x_i, y_i,\widetilde{x}_i).
 \label{eq:sum_funcs}
\end{split}
\end{equation}
Then the functions in \eqref{eq:sum_funcs} are uniformly PL(2). Next, assume the following:
\begin{itemize}
\item[\textbf{(A1)}] The measurement matrix $\A$ has i.i.d.\ Gaussian entries with mean $0$ and variance $1/m$.
\item[\textbf{(A2)}] The noise $\w$ is i.i.d.\ $\sim f(W)$ with finite $\mathbb{E}[|W|^{\max\{k,2\}}]$.
\item[\textbf{(A3)}] The signal and SI $(\x, \widetilde{\x})$ are sampled i.i.d.\ from $f(X, \widetilde{X})$ with finite $\mathbb{E}[|X|^{2}]$, finite $ \mathbb{E}[|\widetilde{X}|^{2}]$, and finite $\mathbb{E}[|X \widetilde{X}|]$.
\item[\textbf{(A4)}] For $t\geq 0$, the denoisers $\eta_t(\cdot, \cdot)$ defined in \eqref{eq:eta_2_iid} are Lipschitz continuous: for scalars $a_1, a_2, b_1, b_2$, and constant $L > 0$,
$|\eta_t(a_1, b_1) - \eta_t(a_2, b_2)| \leq L||(a_1, b_1) - (a_2, b_2)||$.
\end{itemize}
Then, we have the following asymptotic results for the functions defined in \eqref{eq:sum_funcs},
\begin{equation}\label{eq:main}
\begin{aligned}
&\lim_m\phi_m(\ramp^t, {\w}) \overset{p}{=} \lim_m  \mathbb{E}[\phi_m(\mathbf{W} + \sqrt{\lambda_t^2 - \sigma_w^2} \, {\Z_1}, {\mathbf{W}})]),\\
&\lim_n \psi_n(\x^t + {\A}^T {\ramp}^t,\x, \widetilde{\x})  \overset{p}{=} \lim_n \mathbb{E}[\psi_n({\mathbf{X}} + \lambda_t {\Z_2}, {\mathbf{X}},\widetilde{\mathbf{X}})] ,
\end{aligned}
 \end{equation}
 where $\Z_1\sim\mathcal{N}(0,\mathbb{I}_m)$, $\Z_2\sim \mathcal{N}(0,\mathbb{I}_n)$, independent of $\mathbf{W}  \sim i.i.d.\ f(W)$ and $(\mathbf{X}, \widetilde{\mathbf{X}})  \sim i.i.d.\ f(X, \widetilde{X})$.  $\x^t$ and $\ramp^t$ are defined in the AMP-SI recursion~\eqref{eq:1-5_iid}-\eqref{eq:1-6_iid}, and $\lambda_t$ in the SE~\eqref{eq:SE2_iid}.
\end{theorem}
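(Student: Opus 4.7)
The argument splits into two components. First, I would verify that the averaged sequences $\phi_m$ and $\psi_n$ defined in~\eqref{eq:sum_funcs} are uniformly pseudo-Lipschitz of order $2$. Starting from the entry-wise PL(2) bound on $\phi$, one obtains
\begin{equation*}
|\phi_m(\mathbf{a}, \mathbf{b}) - \phi_m(\mathbf{a}', \mathbf{b}')| \leq \frac{L}{m}\sum_{i=1}^m \bigl(1 + \|(a_i, b_i)\| + \|(a_i', b_i')\|\bigr) \|(a_i - a_i', b_i - b_i')\|,
\end{equation*}
after which an application of Cauchy--Schwarz together with the identity $\sum_i \|(a_i, b_i)\|^2 = \|(\mathbf{a}, \mathbf{b})\|^2$ yields a pseudo-Lipschitz constant bounded uniformly in $m$. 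The same argument, applied to the three-component function $\psi$, shows that $\psi_n$ is uniformly PL(2).

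Second, I would recast the AMP-SI iteration~\eqref{eq:1-5_iid}--\eqref{eq:1-6_iid} as a non-separable AMP instance of the type analyzed in~\cite{Berthier2017}, by introducing, for each realization of the SI, the vector denoiser $\bar g_t(\mathbf{v})_i := \eta_t(v_i, \widetilde{x}_i)$. Assumption (A4) guarantees that $\bar g_t(\cdot)$ is Lipschitz as a map $\mathbb{R}^n \to \mathbb{R}^n$ with a constant independent of $n$, so conditional on $\widetilde{\x}$ the hypotheses of the main convergence theorem of~\cite{Berthier2017} are satisfied: (A1) supplies the i.i.d.\ Gaussian matrix, (A2) the finite-moment noise, and (A3) together with the conditional-independence structure supplies the required empirical regularity of $\x$ almost surely in $\widetilde{\x}$. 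The result is that for any uniformly PL(2) test sequence $\Psi_n$,
\begin{equation*}
\Psi_n(\x^t + \A^T\ramp^t, \x) - \mathbb{E}\bigl[\Psi_n(\X + \lambda_t \Z_2, \X) \,\big|\, \widetilde{\x}\bigr] \overset{p}{\longrightarrow} 0,
\end{equation*}
with $\lambda_t$ determined by the non-separable SE of~\cite{Berthier2017}. Owing to the entry-wise form of $\bar g_t$ and the i.i.d.\ structure of $(\x, \widetilde{\x})$, this recursion reduces by a weak law of large numbers to the scalar SE~\eqref{eq:SE2_iid}.

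To recover the exact statements in~\eqref{eq:main}, I would specialize the test function to $\Psi_n(\mathbf{u}, \mathbf{v}) := \frac{1}{n}\sum_{i=1}^n \psi(u_i, v_i, \widetilde{x}_i)$ (and analogously $\frac{1}{m}\sum_i \phi(r_i, w_i)$ for the residual statement). The first step of the plan, applied with the extra argument $\widetilde{x}_i$ absorbed into the entry-wise PL(2) bound, shows that this sequence is uniformly PL(2) almost surely in $\widetilde{\x}$; here the finite second moment of $\widetilde{X}$ from (A3) is crucial so that the random pseudo-Lipschitz constant remains bounded in $n$ almost surely. After applying the conditional convergence above and then a weak law of large numbers over the i.i.d.\ entries of $\widetilde{\x}$, one identifies the conditional expectation with the unconditional $\mathbb{E}[\psi(X + \lambda_t Z, X, \widetilde{X})]$ and obtains~\eqref{eq:main}.

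The principal technical obstacle will be the careful interplay between conditioning on $\widetilde{\x}$---needed so that the denoiser in the framework of~\cite{Berthier2017} depends only on $\mathbf{v}$---and the fact that the same $\widetilde{\x}$ appears as a random argument of the test functions. In particular, one must ensure that the pseudo-Lipschitz constants and moment bounds required by~\cite{Berthier2017} hold uniformly in $n$ on a set of $\widetilde{\x}$'s of full probability, which is precisely where the finite-moment assumptions in (A3) enter. A secondary but essential point is identifying the non-separable SE of~\cite{Berthier2017} with the scalar recursion~\eqref{eq:SE2_iid}, which requires exploiting the i.i.d.\ structure of the joint distribution $(X, \widetilde{X})$ and the entry-wise action of $\bar g_t$ to replace empirical sums by expectations in the large-$n$ limit.
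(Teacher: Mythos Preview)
Your proposal is correct and follows essentially the same route as the paper: Step~1 (uniform PL(2) via Cauchy--Schwarz), Step~2 (embedding AMP-SI into the non-separable framework of~\cite{Berthier2017} by treating $\widetilde{\x}$ as fixed in the denoiser and verifying the required regularity conditions via (A1)--(A4) and the SLLN), and Step~3 (passing from conditional to unconditional expectation via a law of large numbers). You have also correctly identified the main technical subtlety---that the test function $\Psi_n(\mathbf{u},\mathbf{v}) = \tfrac{1}{n}\sum_i \psi(u_i,v_i,\widetilde{x}_i)$ has a pseudo-Lipschitz constant depending on $\|\widetilde{\x}\|/\sqrt{n}$---which the paper handles by conditioning on the high-probability event $\{\|\widetilde{\x}\|/\sqrt{n} < C\}$ rather than on an almost-sure set, but this is only a cosmetic difference in how the same moment bound from (A3) is deployed.
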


Section~\ref{main_proof} contains the proof of Theorem~\ref{thm:SE}. The proof follows from Berthier et al. [11, Theorem 14] and the strong law of large numbers.  The main details involve showing that assumptions $\textbf{(A1)} - \textbf{(A4)}$ allow us to apply~\cite[Theorem 14]{Berthier2017}

As a concrete example of how Theorem~\ref{thm:SE} provides performance guarantees for AMP-SI, let us consider a few interesting pseudo-Lipschitz loss functions.

\begin{corollary}
\label{cor:SE}
Under assumptions $\textbf{(A1)} - \textbf{(A4)}$, letting $\psi^1: \mathbb{R}^{3} \rightarrow \mathbb{R}$ be $\psi^1(x, y, z) = (x-y)^2$, then by Theorem~\ref{thm:SE}, 
\[\lim_{n \rightarrow \infty}  \frac{1}{n} ||\x^t + \A^T \ramp^t- \x||^2 \overset{p}{=}  \lambda_t^2,\]
where $\lambda_t^2$ is defined in \eqref{eq:SE2}.
Similarly if $\psi^2: \mathbb{R}^{3} \rightarrow \mathbb{R}$ is defined as $\psi^2(x, y, z) = (\eta_t(x, z)-y)^2$, then by Theorem~\ref{thm:SE} 
\[\lim_{n \rightarrow \infty}  \frac{1}{n} ||\x^{t+1}- \x||^2 \overset{p}{=}  \delta(\lambda_{t+1}^2 - \sigma_w^2).\]
When $\eta_t$ is Lipschitz, it is straightforward to show that $\psi^1$ and $\psi^2$ are both PL(2), and thus Theorem~\ref{thm:SE} can be applied.
\end{corollary}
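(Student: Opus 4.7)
The proof plan is to apply Theorem~\ref{thm:SE} directly with $\psi = \psi^1$ and $\psi = \psi^2$, after verifying that both loss functions satisfy the PL(2) property so that the hypotheses of the theorem are met. First I would identify the left-hand quantities in the corollary with the sums $\psi^1_n$ and $\psi^2_n$ built from these scalar losses. Choosing $\mathbf{a} = \x^t + \A^T \ramp^t$, $\mathbf{b} = \x$, $\mathbf{c} = \widetilde{\x}$ gives $\psi^1_n(\mathbf{a},\mathbf{b},\mathbf{c}) = \frac{1}{n}\|\x^t + \A^T\ramp^t - \x\|^2$ and, using the AMP-SI update \eqref{eq:1-6_iid} componentwise, $\psi^2_n(\mathbf{a},\mathbf{b},\mathbf{c}) = \frac{1}{n}\|\x^{t+1} - \x\|^2$.

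Next I would verify PL(2). For $\psi^1(x,y,z)=(x-y)^2$ I would use the factorization $(x_1-y_1)^2-(x_2-y_2)^2 = \bigl[(x_1-y_1)+(x_2-y_2)\bigr]\bigl[(x_1-x_2)-(y_1-y_2)\bigr]$ and bound the first bracket by $\sqrt{2}\bigl(\|(x_1,y_1,z_1)\|+\|(x_2,y_2,z_2)\|\bigr)$ and the second by $\sqrt{2}\,\|(x_1,y_1,z_1)-(x_2,y_2,z_2)\|$, yielding a PL(2) bound. For $\psi^2(x,y,z) = (\eta_t(x,z)-y)^2$ the same factorization reduces the task to bounding $\eta_t(x_1,z_1)-\eta_t(x_2,z_2) - (y_1-y_2)$ and $\eta_t(x_1,z_1)+\eta_t(x_2,z_2) - (y_1+y_2)$; assumption \textbf{(A4)} gives $|\eta_t(x,z)|\le |\eta_t(0,0)| + L\|(x,z)\|$, and combining these yields the required PL(2) inequality with a constant depending only on $L$ and $|\eta_t(0,0)|$.

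With PL(2) established, Theorem~\ref{thm:SE} gives
\begin{equation*}
\lim_n \psi^1_n(\x^t+\A^T\ramp^t,\x,\widetilde{\x}) \overset{p}{=} \lim_n \mathbb{E}\!\left[\tfrac{1}{n}\sum_{i=1}^n(X_i + \lambda_t Z_{2,i} - X_i)^2\right] = \lambda_t^2,
\end{equation*}
proving the first claim. For the second claim, the same theorem yields
\begin{equation*}
\lim_n \psi^2_n(\x^t+\A^T\ramp^t,\x,\widetilde{\x}) \overset{p}{=} \mathbb{E}\!\left[(\eta_t(X+\lambda_t Z,\widetilde{X}) - X)^2\right],
\end{equation*}
and comparing with the SE recursion \eqref{eq:SE2_iid} at time $t+1$, which reads $\lambda_{t+1}^2 = \sigma_w^2 + \frac{1}{\delta}\mathbb{E}[(\eta_t(X+\lambda_t Z,\widetilde X)-X)^2]$, gives the stated value $\delta(\lambda_{t+1}^2-\sigma_w^2)$.

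I do not expect a serious obstacle: the only non-mechanical step is the PL(2) verification for $\psi^2$, whose difficulty is mild because Lipschitz continuity of $\eta_t$ provides the linear growth bound on $|\eta_t|$ that the PL(2) inequality requires. Everything else is algebra and direct substitution into Theorem~\ref{thm:SE} and the SE recursion.
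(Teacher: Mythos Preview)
Your proposal is correct and follows exactly the approach the paper indicates: verify that $\psi^1$ and $\psi^2$ are PL(2) (using Lipschitz continuity of $\eta_t$ for the latter), apply Theorem~\ref{thm:SE}, and then identify the resulting expectations with $\lambda_t^2$ and $\delta(\lambda_{t+1}^2-\sigma_w^2)$ via the SE recursion~\eqref{eq:SE2_iid}. The paper itself gives no further detail beyond the sentence embedded in the corollary statement, so your write-up is simply a fleshed-out version of the same argument.
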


\subsection{Examples} \label{subsec:examples}
Next, we consider a few signal and SI models to show how one can derive the denoiser in \eqref{eq:eta_2}, use this to construct the AMP-SI algorithm and the SE, and apply Theorem \ref{thm:SE}. Before we get to the examples we state a lemma that allows us know about how functions with bounded derivative are Lipschitz.
\begin{lemma}
\label{lem:Lipschitz}
A function $\phi: \mathbb{R}^2 \rightarrow \mathbb{R}$ having bounded derivatives, $0 < \mathsf{D}_1,  \mathsf{D}_2 < \infty,$
\[\Big \lvert \frac{\partial}{\partial x} \phi(x, y) \Big \lvert \leq \mathsf{D}_1 \qquad \text{and } \qquad \Big \lvert \frac{\partial}{\partial y} \phi(x, y)\Big \lvert \leq \mathsf{D}_2\]
 is Lipschitz continuous with Lipschitz constant $\sqrt{\mathsf{D}_1^2 + \mathsf{D}_2^2}$.
\end{lemma}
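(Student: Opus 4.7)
The plan is to prove the bound by integrating the directional derivative of $\phi$ along the straight line segment joining two arbitrary points $(x_1,y_1)$ and $(x_2,y_2)$ in $\mathbb{R}^2$. Define $h:[0,1] \to \mathbb{R}$ by $h(t) := \phi\bigl((1-t)(x_1,y_1) + t(x_2,y_2)\bigr)$, so that $h(0) = \phi(x_1,y_1)$ and $h(1) = \phi(x_2,y_2)$. Since $\phi$ has bounded (hence locally integrable) partial derivatives, $h$ is absolutely continuous, and the fundamental theorem of calculus gives $\phi(x_2,y_2) - \phi(x_1,y_1) = \int_0^1 h'(t)\,dt$.

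By the chain rule, $h'(t) = (\partial_x \phi)\bigl(\gamma(t)\bigr)(x_2-x_1) + (\partial_y \phi)\bigl(\gamma(t)\bigr)(y_2-y_1)$, where $\gamma(t) = (1-t)(x_1,y_1) + t(x_2,y_2)$. Viewing $h'(t)$ as an inner product of the gradient vector with the displacement $(x_2-x_1,\,y_2-y_1)$ and applying the Cauchy--Schwarz inequality yields
\[
|h'(t)| \;\leq\; \sqrt{(\partial_x\phi)^2 + (\partial_y\phi)^2}\;\sqrt{(x_2-x_1)^2 + (y_2-y_1)^2}.
\]
The hypotheses $|\partial_x \phi| \leq \mathsf{D}_1$ and $|\partial_y \phi| \leq \mathsf{D}_2$ then bound the first factor by $\sqrt{\mathsf{D}_1^2 + \mathsf{D}_2^2}$, uniformly in $t$. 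Integrating the inequality on $[0,1]$ transfers the bound directly to $|\phi(x_2,y_2) - \phi(x_1,y_1)|$, producing the claimed Lipschitz constant.

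There is no real obstacle here; the argument is routine. The only nuance worth flagging is that the constant $\sqrt{\mathsf{D}_1^2 + \mathsf{D}_2^2}$ (rather than, for example, $\mathsf{D}_1 + \mathsf{D}_2$) arises precisely because Cauchy--Schwarz is used to combine the two partial-derivative bounds against the Euclidean norm of the displacement, matching the Euclidean norm used on the left-hand side of the Lipschitz inequality.
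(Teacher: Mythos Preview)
Your argument is correct, but it differs from the paper's. The paper does not parametrize the straight segment between $(x_1,y_1)$ and $(x_2,y_2)$; instead it inserts the intermediate point $(x_1,y_2)$ and splits
\[
|\phi(x_1,y_1)-\phi(x_2,y_2)| \le |\phi(x_1,y_1)-\phi(x_1,y_2)| + |\phi(x_1,y_2)-\phi(x_2,y_2)|,
\]
bounding each piece by the single-variable mean value theorem to get $\mathsf{D}_2|y_1-y_2| + \mathsf{D}_1|x_1-x_2|$, and only then applies Cauchy--Schwarz to pass to the Euclidean norm. So the paper uses an axis-parallel (``taxicab'') path and applies Cauchy--Schwarz once at the end, whereas you use the diagonal path and apply Cauchy--Schwarz pointwise to the gradient before integrating. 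Your route is arguably more direct and generalizes cleanly to $\mathbb{R}^d$; the paper's route has the small advantage that it only needs the one-variable mean value theorem on each coordinate slice, so it goes through assuming merely that the partial derivatives exist and are bounded, without invoking the multivariate chain rule (which, strictly speaking, requires differentiability of $\phi$, not just existence of partials). In the present setting the denoisers are smooth, so this distinction is immaterial.
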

\begin{proof}
The result follows using the Triangle Inequality and Cauchy-Schwarz,
\begin{equation}
\begin{split}
&\lvert \phi(x_1, y_1) - \phi(x_2, y_2) \lvert
\\ &= \lvert \phi(x_1, y_1) - \phi(x_1, y_2) + \phi(x_1, y_2) - \phi(x_2, y_2) \lvert  \\
&\leq \lvert \phi(x_1, y_1) - \phi(x_1, y_2) \lvert + \lvert \phi(x_1, y_2) - \phi(x_2, y_2) \lvert  \\
&\leq  \mathsf{D}_2 \lvert y_1- y_2 \lvert +  \mathsf{D}_1 \lvert x_1 - x_2 \lvert \\
&\leq  \sqrt{\mathsf{D}_2^2+  \mathsf{D}_1^2} \sqrt{ ( y_1- y_2)^2  + (x_1 - x_2)^2} \\
&=  \sqrt{\mathsf{D}_2^2+  \mathsf{D}_1^2}||{(x_1, y_1) - (x_2, y_2)||}.
\end{split}
\end{equation}
\end{proof}

\subsubsection{Gaussian-Gaussian Signal and SI}
In this model, referred to as the GG model henceforth, the signal has i.i.d. Gaussian entries with zero mean and finite variance and we have access to SI in the form of the signal with additive white Gaussian noise (AWGN). The signal, $X$, and SI, $\widetilde X$, are related by
\begin{equation}\label{eq:SI form}
    \widetilde X=X+ \mathcal{N}(0, \sigma^2\mathbb{I}).
\end{equation}
In this case, the AMP-SI denoiser~\eqref{eq:eta_2} equals~\cite{Ma2018}
\begin{equation}
\begin{aligned}
\eta_{t}(a,b)&=\mathbb{E}\left[X\middle| X+\lambda_{t}Z=a, \widetilde X=b\right]\\& = \frac{\sigma_x^2 \sigma^2 a + \sigma_x^2 \lambda_{t}^2 b}{\sigma_x^2 (\sigma^2 + \lambda_{t}^2) + \sigma^2 \lambda_{t}^2}.
\label{eq:GGdenoiser}
\end{aligned}
\end{equation} 
Then the SE~\eqref{eq:SE2} can be computed as
\begin{equation}
\begin{split}
\lambda_t^2 = \sigma_w^2 + \frac{1}{\delta}\left[ \frac{ \sigma_x^2 \sigma^2 \lambda_{t-1}^2 }{\sigma_x^2 (\sigma^2 + \lambda_{t-1}^2) + \sigma^2 \lambda_{t-1}^2} \right] .
\end{split}
\end{equation}
We note that  as a result of Lemma \ref{lem:Lipschitz} because
\[\Big \lvert \frac{\partial}{\partial a} \eta_{t}(a,b) \Big \lvert = \Big \lvert \frac{\sigma_x^2 \sigma^2}{\sigma_x^2 (\sigma^2 + \lambda_{t}^2) + \sigma^2 \lambda_{t}^2} \Big \lvert \leq 1,\]
and
\[\Big \lvert \frac{\partial}{\partial b} \eta_{t}(a,b) \Big \lvert = \Big \lvert \frac{ \sigma_x^2 \lambda_{t}^2}{\sigma_x^2 (\sigma^2 + \lambda_{t}^2) + \sigma^2 \lambda_{t}^2} \Big \lvert \leq 1,\]
and therefore the assumptions $\textbf{(A1)} - \textbf{(A4)}$ are satisfied in the GG case and we can apply Thoerem~\ref{thm:SE}.
\subsubsection{Bernoulli-Gaussian Signal and SI}
The Bernoulli-Gaussian (BG) model reflects scenario in which one wishes to recover a sparse signal and has access to SI in the form of the signal with AWGN as in \eqref{eq:SI form}. In this model, each entry of the signal is independently generated according to $x_i\sim \epsilon\mathcal{N}(0,1)+(1-\epsilon)\delta_0$, where $\delta_0$ is the Dirac delta function at $0$. In words, the entries of the signal independently take the value $0$ with probability $1-\epsilon$ and are $\mathcal{N}(0,1)$ with probability $\epsilon$. 
In this case, the AMP-SI denoiser~\eqref{eq:eta_2} equals~\cite{Ma2018}
\begin{equation}
\begin{aligned}
\eta_{t}(a,b)&=\mathbb{E}\left[X\middle| X+\lambda_{t}Z=a, \widetilde X=b\right]\\&=\Pr \left (X\neq0|a, b \right)\mathbb{E}\left[X\middle|a, b, X\neq 0\right] \\&=\Pr \left (X\neq0|a, b \right) \frac{ \sigma^2 a + \lambda_{t}^2 b}{ \sigma^2 + \lambda_{t}^2 + \sigma^2 \lambda_{t}^2},
\label{eq:BGdenoiser}
\end{aligned}
\end{equation} 
where, letting $\rho_{\tau^2}(x)$ be the zero-mean Gaussian density with variance $\tau^2$ evaluated at $x$, and defining $\nu_{t} := \sigma^2 \lambda_{t}^2 (\sigma^2+\lambda_{t}^2 + \sigma^2 \lambda_{t}^2)$,
\begin{equation}
\begin{split}
\Pr(X\neq0|a, b) &=\left( 1+ T_{a,b}\right)^{-1},
\label{eq:BGprobability}
\end{split}
\end{equation}
where we denote
\begin{equation}
\begin{split}
T_{a,b}&:=\frac{(1 - \epsilon) \rho_{\lambda_{t}^2}(a) \rho_{\sigma^2}(b) }{ \epsilon\rho_{1 + \sigma^2}(b)   \rho_{\frac{\sigma^2}{1 + \sigma^2} + \lambda_{t}^2}\Big(\frac{b}{1 + \sigma^2}  - a\Big)} \\
&= \Big(\frac{1 - \epsilon}{\epsilon}\Big) \sqrt{\frac{\sigma^2+\lambda_{t}^2 + \sigma^2 \lambda_{t}^2}{ \lambda_{t}^2 \sigma^2}}  \\
&\qquad\exp\Big\{\frac{-( \sigma^2 a + \lambda_{t-1}^2 b)^2}{2 \sigma^2 \lambda_{t-1}^2 (\sigma^2+\lambda_{t}^2 + \sigma^2 \lambda_{t}^2)}\Big\}\\
&= \Big(\frac{1 - \epsilon}{\epsilon}\Big) \frac{\nu_{t}  \sqrt{2 \pi }}{ \lambda_{t}^2 \sigma^2}  \rho_{\nu_{t} }( \sigma^2 a + \lambda_{t}^2 b),
\label{eq:Tab_def}
\end{split}
\end{equation}
Then the SE~\eqref{eq:SE2} can be computed as
\begin{equation}
\begin{split}
\lambda_t^2 =\sigma_w^2 + \frac{1}{\delta}\Big(\frac{T_{a,b}}{1 + T_{a,b}}\Big)^2 \left[ \frac{ (\sigma^2+\lambda_{t-1}^2)+ \sigma^2 \lambda_{t-1}^2 }{\sigma^2 + \lambda_{t-1}^2 + \sigma^2 \lambda_{t-1}^2} \right].
\label{eq:BG_SE}
\end{split}
\end{equation}
We again use Lemma \ref{lem:Lipschitz} to show that the denoiser defined in \eqref{eq:BGdenoiser} and \eqref{eq:BGprobability} is Lipschitz continuous so that the assumptions $\textbf{(A1)} - \textbf{(A4)}$ are satisfied in the BG case and we can apply Thoerem~\ref{thm:SE}. We study the partial derivatives. 
Denote
\begin{equation}
f_{a,b} :=\frac{ \sigma^2 a + \lambda_{t}^2 b}{ \sigma^2 + \lambda_{t}^2 + \sigma^2 \lambda_{t}^2}.
\label{eq:fab_def}
\end{equation}
Combining \eqref{eq:BGprobability} and \eqref{eq:Tab_def} and \eqref{eq:fab_def}, 
\begin{equation*}
\begin{aligned}
\eta_{t}(a,b) &= (1 + T_{a,b})^{-1} f_{a,b}.
\end{aligned}
\end{equation*} 
Then,
\begin{equation}
\begin{split}
\label{eq:partial_BG_1_V1}
&\Big \lvert \frac{\partial \eta_{t}(a,b) }{\partial a}  \Big \lvert  \\
&= \Big \lvert\frac{1}{1 + T_{a,b}} \Big[ \frac{\partial f_{a,b}}{\partial a} \Big] -\frac{1}{(1+T_{a,b})^{2}} \Big[\frac{\partial T_{a,b}}{\partial a}\Big]  f_{a,b}\Big \lvert \\
&\leq \frac{ (1 + 2T_{a,b} )}{(1 + T_{a,b})^{2}}\Big \lvert  \frac{\partial f_{a,b}}{\partial a} \Big\lvert +  \frac{1}{(1 + T_{a,b})^{2}}  \Big \lvert   \frac{\partial (  T_{a,b} f_{a,b})}{\partial a} \Big \lvert.
\end{split}
\end{equation}
Now we show upperbounds for the two terms of \eqref{eq:partial_BG_1_V1} separately.  For the first term, we see that $\frac{\partial f_{a,b}}{\partial a} \leq 1$, so 
\[\frac{ (1 + 2T_{a,b} )}{(1 + T_{a,b})^{2}}\Big \lvert  \frac{\partial f_{a,b}}{\partial a} \Big\lvert \leq 1.\]
Now we consider the second term of 

Consider the second term of \eqref{eq:partial_BG_1_V1}. First we note that
\[ \frac{1}{(1 + T_{a,b})^{2}} \Big \lvert  T_{a,b} \Big[ \frac{\partial}{\partial a} f_{a,b}\Big] + \Big[\frac{\partial}{\partial a} T_{a,b}\Big]f_{a,b}  \Big \lvert \leq  \Big \lvert \frac{\partial}{\partial a}  \Big[ T_{a,b} f_{a,b}\Big]  \Big \lvert.\]
Then from \eqref{eq:Tab_def} and \eqref{eq:fab_def},
\begin{equation*}
\begin{split}
T_{a,b} f_{a,b} 
&= \Big(\frac{1 - \epsilon}{\epsilon}\Big) \sqrt{2 \pi}(\sigma^2 a + \lambda_{t}^2 b) \rho_{ \nu_{t} }(\sigma^2 a + \lambda_{t}^2 b),
\end{split}
\end{equation*}
then using that $\frac{\partial}{\partial x} \rho_{\tau^2}(x) = - \frac{x}{\tau^2}  \rho_{\tau^2}(x)$, we have
\begin{equation}
\begin{split}
\label{eq:partial_BG_1}
\Big \lvert \frac{\partial}{\partial a}  \Big[ T_{a,b} f_{a,b}\Big]   \Big \lvert  &= \Big(\frac{1 - \epsilon}{\epsilon}\Big) \sqrt{2 \pi}\Big \lvert \sigma^2  \rho_{ \nu_{t} }(\sigma^2 a + \lambda_{t}^2 b) \\& -\frac{\sigma^2(\sigma^2 a + \lambda_{t}^2 b)^2}{\nu_{t}}  \rho_{ \nu_{t} }(\sigma^2 a + \lambda_{t}^2 b)\Big \lvert \\
&= \Big(\frac{1 - \epsilon}{\epsilon}\Big) \frac{\sqrt{2 \pi} \sigma^2}{\nu_{t}}  \rho_{ \nu_{t} }(\sigma^2 a + \lambda_{t}^2 b) \\& \qquad\Big \lvert \nu_{t} - (\sigma^2 a + \lambda_{t}^2 b)^2 \Big \lvert.
\end{split}
\end{equation}
To upper bound the above, we use $\exp\{-x\} \leq \frac{1}{1+x}$ when $x \geq 0$, and so
\[\rho_{\tau^2}(x) = \frac{1}{\sqrt{2\pi \tau^2 }} \exp\Big\{\frac{-x^2}{2 \tau^2}\Big\} \leq \sqrt{\frac{2}{\pi}} \Big(\frac{\tau}{2 \tau^2 + x^2}\Big).\]
Using this in \eqref{eq:partial_BG_1}, we find
\begin{equation}\label{eq:lastpartial_a_bound}
\begin{split}
\Big \lvert \frac{\partial}{\partial a}  \Big[ T_{a,b} f_{a,b}\Big]  \Big \lvert  &\leq \frac{2 \sigma^2}{\sqrt{ \nu_{t}}} \Big(\frac{1 - \epsilon}{\epsilon}\Big)\frac{  \lvert \nu_{t} - (\sigma^2 a + \lambda_{t}^2 b)^2 \lvert}{2  \nu_{t} + (\sigma^2 a + \lambda_{t}^2 b) ^2}  \\&\leq \frac{2 \sigma^2}{\sqrt{ \nu_{t}}} \Big(\frac{1 - \epsilon}{\epsilon}\Big) \leq \frac{2(1 - \epsilon)}{ \sigma_w  \epsilon},
\end{split}
\end{equation}
where in the final inequality we use $ \lambda_{t} \geq \sigma_w$ by \eqref{eq:BG_SE}, and 
\begin{equation}
\begin{aligned}
\frac{ \sigma^2}{\sqrt{ \nu_{t}}}& = \frac{ \sigma}{  \lambda_{t} \sqrt{ \sigma^2+\lambda_{t}^2 + \sigma^2 \lambda_{t}^2}} \\&= \frac{ 1}{  \lambda_{t} \sqrt{ 1+ \frac{\lambda_{t}^2}{\sigma^2} + \lambda_{t}^2}}   \leq \frac{ 1}{  \lambda_{t}}.
\end{aligned}
\end{equation}
Using the above in \eqref{eq:partial_BG_1_V1}, we have 
\begin{equation*} 
\Big\lvert \frac{\partial}{\partial a} \eta_{t}(a,b) \Big\lvert \leq  1+  \frac{2(1 - \epsilon)}{\sigma_w \epsilon} .
\end{equation*}
As in \eqref{eq:partial_BG_1_V1} we can show
\begin{equation*}
\begin{split}
\Big \lvert \frac{\partial}{\partial b} \eta_{t}(a,b) \Big \lvert  &\leq \frac{ (1 + 2T_{a,b} )}{(1 + T_{a,b})^{2}}\Big \lvert \Big[ \frac{\partial}{\partial b} f_{a,b}\Big] \Big\lvert \\&+  \frac{1}{(1 + T_{a,b})^{2}} \Big \lvert  T_{a,b} \Big[ \frac{\partial}{\partial b} f_{a,b}\Big] + \Big[\frac{\partial}{\partial b} T_{a,b}\Big]f_{a,b}  \Big \lvert,
\end{split}
\end{equation*}
Then,
\begin{equation*}
\begin{split}
 \frac{ (1 + 2T_{a,b} )}{(1 + T_{a,b})^{2}}\Big \lvert \frac{\partial}{\partial b} f_{a,b} \Big\lvert  &\leq  1,
\end{split}
\end{equation*}
and a bound as in \eqref{eq:partial_BG_1} - \eqref{eq:lastpartial_a_bound} gives
\begin{equation*}
\begin{split}
 \frac{1}{(1 + T_{a,b})^{2}} &\Big \lvert  T_{a,b} \Big[ \frac{\partial}{\partial b} f_{a,b}\Big] + \Big[\frac{\partial}{\partial b} T_{a,b}\Big]f_{a,b}  \Big \lvert \leq  \Big \lvert  \frac{\partial}{\partial b}  \Big[ T_{a,b} f_{a,b}\Big]   \Big \lvert \\
 &\leq  \frac{2  \lambda_{t}^2}{\sqrt{\nu_{t} }} \Big(\frac{1 - \epsilon}{\epsilon}\Big) \frac{ \lvert \nu_{t} - (\sigma^2 a + \lambda_{t}^2 b)^2 \lvert}{2 \nu_{t}  + (\sigma^2 a + \lambda_{t}^2 b)^2} \\
 &\leq \frac{2  \lambda_{t}^2}{\sqrt{\nu_{t} }} \Big(\frac{1 - \epsilon}{\epsilon}\Big) \leq \frac{2(1 - \epsilon)}{ \sigma  \epsilon} .
\end{split}
\end{equation*}
\subsection{Numerical Examples} 
Finally, we provide numerical results to compare the empirical mean square error (MSE) performance of AMP-SI and the performance predicted by SE. Fig.\ \ref{fig:GG} shows the MSE achieved by AMP-SI in the GG scenario and the SE prediction of its performance. In this example, the signal variance $\sigma_x^2=1$, the measurement noise variance $\sigma_{w}^2=0.01$, the variance of AWGN in SI $\sigma^2=0.04$. We averaged over 10 trials of a GG recovery problem for empirical results of AMP-SI. The comparison in Fig.~\ref{fig:GG}(a), Fig.~\ref{fig:GG}(b) and Fig.~\ref{fig:GG}(c) given by three different signal length. For smaller $n$ there is some gap between the empirical MSE and the SE prediction, as shown in Fig.\ \ref{fig:GG} for $n=100$, but the gap shrinks as $n$ is increased. The results show the empirical MSE tracks the SE prediction nicely.

Fig.\ \ref{fig:BG} shows the MSE achieved by AMP-SI in the BG scenario, and the SE prediction of its performance. 
We again averaged over 10 trials of a BG recovery problem for empirical results of AMP-SI. The signal length $n=10000$, $m=3000$, the measurement noise variance $\sigma_{w}^2=0.01$, and $\epsilon=0.2$, where $20\%$ of the entries in the signal are nonzero. We vary the variance of AWGN in SI from $\sigma^2=0.04$, $\sigma^2=0.25$, and $\sigma^2=1$. The results show that SE can predict the MSE achieved by AMP-SI at every iteration.
\begin{figure}
\centering
\subcaptionbox{n=100}{\includegraphics[width=0.4\textwidth]{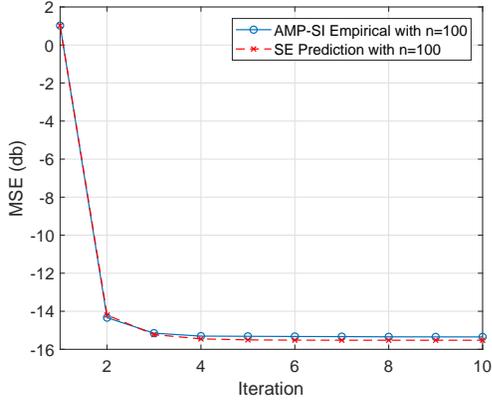}\label{GG_n=100}}%
\hfill
\subcaptionbox{n=1000}{\includegraphics[width=0.4\textwidth]{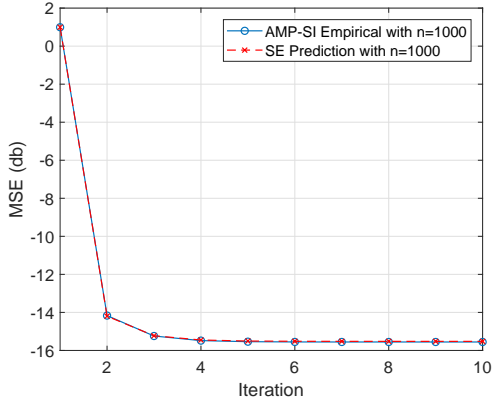}\label{GG_n=1000}}%
\hfill 
\subcaptionbox{n=10000}{\includegraphics[width=0.4\textwidth]{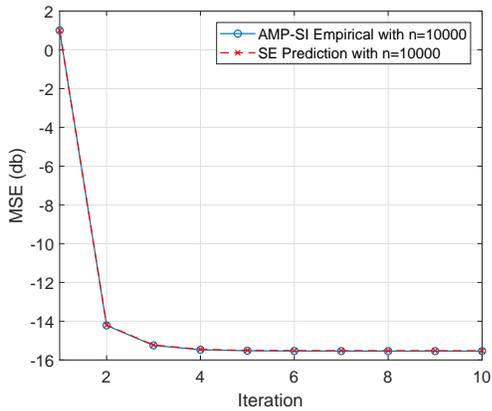}\label{GG_n=10000}}%
\caption{Empirical MSE performance of AMP-SI and SE prediction. (GG model, $\delta=0.3$, $\sigma_{x}=1$, $\sigma_{w}=0.1$, and $\sigma=0.2$.)}
\label{fig:GG}
\end{figure}

\begin{figure}
 \centering
 \includegraphics[width=0.8\linewidth]{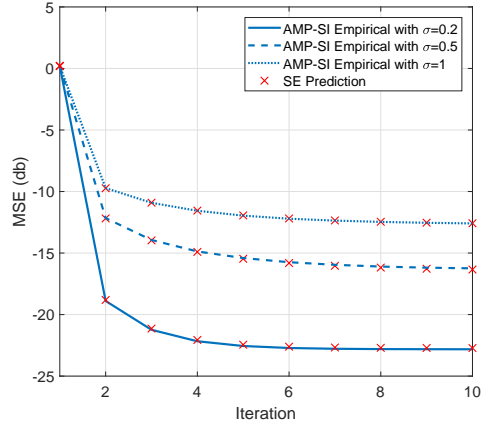}
 \caption{Empirical MSE performance of AMP-SI and SE prediction. (BG model, $n=10000$, $m=3000$, $\epsilon=0.2$, $\sigma_{w}=0.1$.)}
  \label{fig:BG}
\end{figure}
\section{Proof of Theorem~\ref{thm:SE}}\label{main_proof}
\subsection{Step 1}
First we show that the functions defined in \eqref{eq:sum_funcs} are uniformly PL(2) when $\phi$ and $\psi$ are PL(2).  This is a straightforward application of Cauchy-Schwarz.  We show the result for $\phi$ and the result for $\psi$ follows similarly.

First, by the fact that $\phi$ is PL(2) ,
\begin{equation*}
\begin{aligned}
&\lvert\phi_m(\mathbf{a}, \widetilde{\mathbf{a}}) - \phi_m(\mathbf{b}, \widetilde{\mathbf{b}}) \lvert 
\leq \frac{1}{m}\sum_{i=1}^m \lvert\phi(a_i, \widetilde{a}_i)- \phi(b_i, \widetilde{b}_i) \lvert \\
& \leq \frac{L}{m}\sum_{i=1}^m\Big[1 +  \frac{\lvert \lvert(a_i, \widetilde{a}_i) \lvert \lvert}{\sqrt{2}}   +  \frac{\lvert \lvert(b_i, \widetilde{b}_i) \lvert \lvert }{\sqrt{2}} \Big]
 \frac{\lvert \lvert(a_i, \widetilde{a}_i) - (b_i, \widetilde{b}_i) \lvert \lvert}{\sqrt{2}}.
\end{aligned}
\end{equation*}
Then applying Cauchy-Schwarz in the following way: for any $r >0$ and $a_1, a_2, \ldots, a_m$ scalars, $(|a_1| +|a_2| + \ldots |a_m|)^r \leq m^{r-1}(|a_1|^r +|a_2|^r + \ldots |a_m|^r)$, we have

\begin{equation*}
\begin{aligned}
&\lvert\phi_m(\mathbf{a}, \widetilde{\mathbf{a}}) - \phi_m(\mathbf{b}, \widetilde{\mathbf{b}}) \lvert^2 \\
& \leq  \frac{L^2}{2 m^2}\sum_{i=1}^m\Big[1 +  \frac{\lvert \lvert(a_i, \widetilde{a}_i) \lvert \lvert}{\sqrt{2}}   +  \frac{\lvert \lvert(b_i, \widetilde{b}_i) \lvert \lvert}{\sqrt{2}} \Big]^2\lvert\lvert(\mathbf{a}, \widetilde{\mathbf{a}}) - (\mathbf{b}, \widetilde{\mathbf{b}}) \lvert\lvert^2
\\& \leq 3L^2\Big[1 +  \frac{\lvert \lvert(\mathbf{a}, \widetilde{\mathbf{a}}) \lvert \lvert^{2}}{{2m}}   +  \frac{\lvert \lvert(\mathbf{b}, \widetilde{\mathbf{b}}) \lvert \lvert^{2} }{{2m}} \Big]
\frac{\lvert\lvert(\mathbf{a}, \widetilde{\mathbf{a}}) - (\mathbf{b}, \widetilde{\mathbf{b}}) \lvert\lvert^2}{2m}.
\end{aligned}
\end{equation*}
In the final inequality in the above we have used that
\begin{align*}
&\frac{1}{m} \sum_{i=1}^m\Big[1 +  \frac{\lvert \lvert(a_i, \widetilde{a}_i) \lvert \lvert}{\sqrt{2}}   +  \frac{\lvert \lvert(b_i, \widetilde{b}_i) \lvert \lvert}{\sqrt{2}} \Big]^2 \\
&\leq  3\sum_{i=1}^m\Big[\frac{1}{m} + \frac{\lvert \lvert(a_i, \widetilde{a}_i) \lvert \lvert^{2}}{2m}   +  \frac{\lvert \lvert(b_i, \widetilde{b}_i) \lvert \lvert^{2} }{2m} \Big]
\\
&=  3\Big( 1 + \sum_{i=1}^m \frac{a_i^2 + \widetilde{a}_i^2}{2m} + \sum_{i=1}^m \frac{b_i^2 + \widetilde{b}_i^2 }{2m}\Big).
\end{align*}
Finally, we note that this implies
\begin{equation*}
\begin{aligned}
&\lvert\phi_m(\mathbf{a}, \widetilde{\mathbf{a}}) - \phi_m(\mathbf{b}, \widetilde{\mathbf{b}}) \lvert \\
& \leq \sqrt{3}L\Big[ 1 +  \frac{||(\mathbf{a}, \widetilde{\mathbf{a}})||}{\sqrt{2m}}   + \frac{||(\mathbf{b}, \widetilde{\mathbf{b}})||}{\sqrt{2m}}  \Big] \frac{ \lvert \lvert(\mathbf{a}, \widetilde{\mathbf{a}}) - (\mathbf{b}, \widetilde{\mathbf{b}}) \lvert \lvert}{\sqrt{2m}}.
\end{aligned}
\end{equation*}
\subsection{Step 2}
Next we show the asymptotic results given in \eqref{eq:main}.
First we use Berthier \emph{et al}.~\cite[Theorem 14]{Berthier2017} and then we make an appeal to the strong law of large numbers (SLLN):  We remind the reader of the strong law:
\begin{definition}
\label{thm:LLN}
{\textbf{\emph{Strong Law of Large Numbers}}}~\cite{JR2006}: Let
$X_1, X_2,...$ be a sequence of i.i.d.\ random variables with finite mean
$\mu$. Then
\begin{equation}\label{eq:2-1}
\Pr\left(\lim_{n\to\infty}\frac{1}{n}(X_1+X_2+...+X_n)=\mu\right)=1,
\end{equation}
In words, the partial averages $\frac{1}{n}(X_1+X_2+...+X_n)$ converge almost surely
to $\mu < \infty$.
\end{definition}

We will make use of Berthier \emph{et al.}~\cite[Theorem 14]{Berthier2017}, restated here for convenience. To apply the result in Berthier \emph{et al.}~\cite[Theorem 14]{Berthier2017}, one needs to justify the following assumptions:
\begin{itemize}
\item[\textbf{(C1)}] The measurement matrix $\A$ has Gaussian entries with i.i.d.\  mean $0$ and variance $1/m$.
\item[\textbf{(C2)}] Define a sequence of denoisers $\widetilde{\eta}_{n}^t: \mathbb{R}^n \rightarrow \mathbb{R}^n$ to be those that apply the denoiser $\eta_t$ defined in \eqref{eq:eta_2_iid} elementwise as follows: $\widetilde{\eta}_{n}^t({\x}) := \eta_t({\x}, \widetilde{{\x}})$.
For each $t$, $\widetilde{\eta}_{n}^t(\cdot)$ are uniformly Lipschitz. A function is \emph{uniformly} Lipschitz in $n$ if the Lipschitz constant does not depend on $n$. 

\item[\textbf{(C3)}]  $||\x||_2^2/n$ converges to a constant as $n\to\infty$.
\item[\textbf{(C4)}] The limit $\sigma_w=\lim_{m\to\infty}{||\w||_2}/{\sqrt{m}}$ is finite.
\item[\textbf{(C5)}]
For any iterations $s, t\in \mathbb{N} $ and for any $2 \times 2$ covariance matrix $\boldsymbol \Sigma$, the following limits exist.
\begin{align*}
&\lim_{n\to\infty}\frac{1}{n} \mathbb{E}_{\Z}[\x^T \widetilde{\eta}_{n}^t(\x+\Z)]< \infty,\\
&\lim_{n\to\infty}\frac{1}{n} \mathbb{E}_{\Z,\Z'}\left[\widetilde{\eta}_{n}^t(\x+ \Z)^T
\widetilde{\eta}_{n}^s(\x+ \Z')\right] < \infty,
\end{align*}
where $(\mathbf{Z}, \mathbf{Z}')\sim N(0, \boldsymbol \Sigma \otimes \mathbb I_n)$, with $\otimes$ denoting the tensor product and $\mathbb I_n$ the identity matrix.
\end{itemize}
\begin{theorem}
\label{thm:Berthier}
Under the assumptions $\textbf{(C1)} - \textbf{(C5)}$, for any sequences of uniformly pseudo-Lipschitz functions 
$\rho_m: \mathbb{R}^{m} \times \mathbb{R}^{m} \rightarrow \mathbb{R}$ and $\gamma_n: \mathbb{R}^{n} \times \mathbb{R}^{n} \rightarrow \mathbb{R}$,
\begin{equation*}
\begin{aligned}
&\lim_m ( \rho_m({\ramp}^t, {\w}) -  \mathbb{E}_{{\Z_1}}[\rho_m({\w} + \sqrt{\lambda_t^2 - \sigma_w^2} \, {\Z_1}, {\w})])\overset{p}{=}0, \\
&\lim_n\left( \gamma_n({\x}^{t} + {\A}^T {\ramp}^t, {\x}) - \mathbb{E}_{{\Z_2}}\left[\gamma_n({\x} + \lambda_t {\Z_2}, {\x})\right] \right) \overset{p}{=}0,
\end{aligned}
\end{equation*}
where $\Z_1\sim \mathcal{N}(0,\mathbb{I}_m)$, $\Z_2\sim \mathcal{N}(0,\mathbb{I}_n)$, $\x^t$ and $\ramp^t$ are defined in the AMP-SI recursion~\eqref{eq:1-5_iid}-\eqref{eq:1-6_iid}, and $\lambda_t$ in the SE~\eqref{eq:SE2}.
\end{theorem}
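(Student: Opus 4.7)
The approach is direct: Theorem~\ref{thm:Berthier} is a verbatim restatement of \cite[Theorem 14]{Berthier2017}, so the plan is to cite their argument rather than rederive it. To sketch that argument at a high level, it is the Bayati--Montanari conditioning technique extended to non-separable but uniformly Lipschitz denoisers. First, one conditions on the $\sigma$-algebra $\mathcal{S}_t := \sigma(\x,\w,\x^0,\ldots,\x^t,\ramp^0,\ldots,\ramp^{t-1})$ and uses the classical fact that, conditional on $\mathcal{S}_t$, the Gaussian matrix $\A$ decomposes as a deterministic orthogonal projection onto the span of the prior iterates plus an independent fresh Gaussian piece. Assumption \textbf{(C1)} is exactly what makes this conditional decomposition available.

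Second, one proves by induction on $t$ the distributional representations $\x^t+\A^T\ramp^t \approx \x + \lambda_t \Z_2$ and $\ramp^t \approx \w + \sqrt{\lambda_t^2-\sigma_w^2}\,\Z_1$, in the precise sense that empirical averages against any uniformly PL(2) test function agree with the corresponding Gaussian expectation up to an $o_P(1)$ term. The inductive step requires three ingredients: (i) the Onsager correction $\tfrac{\ramp^{t-1}}{\delta}\sum_i \eta'_{t-1}$ appearing in \eqref{eq:1-5_iid} cancels the bias introduced by reusing $\A$ across iterations, with the uniform Lipschitz hypothesis \textbf{(C2)} supplying the Stein-type identity that makes this cancellation work; (ii) the limits in \textbf{(C5)} guarantee that the scalar parameter $\lambda_t$ evolves deterministically and agrees with the SE \eqref{eq:SE2}; and (iii) assumptions \textbf{(C3)}--\textbf{(C4)} fix the base case by pinning down $\lambda_0$. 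Given this inductive distributional statement, the two convergence claims of Theorem~\ref{thm:Berthier} follow by Gaussian concentration combined with the uniformly pseudo-Lipschitz control of $\rho_m$ and $\gamma_n$: the PL(2) envelope multiplies the $o_P(1)$ error by a factor bounded in $L^2$, and this $L^2$ bound is where the finite-variance conclusions from \textbf{(C3)}--\textbf{(C4)} enter.

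The main obstacle, and the reason Berthier \emph{et al}.\ require \textbf{(C5)} beyond the original Bayati--Montanari hypotheses, is verifying the cross-iterate covariance identities for general uniformly Lipschitz (possibly non-separable) $\widetilde{\eta}_n^t$; this is what forces the extra hypothesis that certain empirical inner products have well-defined limits and what blocks a naive reuse of the separable-denoiser proof. In our application the denoiser \eqref{eq:eta_2_iid} is separable entry-wise in $\widetilde x_i$, so the induction could alternatively be carried out coordinate-wise as in \cite{Bayati2011}, but the more general Berthier \emph{et al}.\ formulation is convenient because it immediately accommodates the index-dependent denoisers forced on us by the SI in \eqref{eq:1-5}--\eqref{eq:1-6}.
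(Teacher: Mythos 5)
Your proposal matches the paper's treatment exactly: Theorem~\ref{thm:Berthier} is imported verbatim from Berthier \emph{et al}.\ \cite[Theorem 14]{Berthier2017} and the paper offers no independent proof, only the citation, so deferring to their argument is precisely what is done here. Your high-level sketch of the conditioning-technique induction and the role of \textbf{(C5)} for non-separable denoisers is a faithful (if uncheckable at this level of detail) description of their proof and adds context without changing the approach.
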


Now we demonstrate that our assumptions $\textbf{(A1)} - \textbf{(A4)}$ stated in Section~\ref{main_result} are enough to satisfy the assumptions $\textbf{(C1)} - \textbf{(C5)}$ needed to apply Theorem~\ref{thm:Berthier}. 

Assumptions $\textbf{(A1)}$ and $\textbf{(C1)}$ are identical.  We will show that $\textbf{(C2)}$ follows from $\textbf{(A4)}$, $\textbf{(C4)}$ follows from $\textbf{(A2)}$, and $\textbf{(C3)}$ follows from $\textbf{(A3)}$.  Finally we show $\textbf{(C5)}$ follow from $\textbf{(A3)}$ and $\textbf{(A4)}$.

First consider assumption $\textbf{(C2)}$. The non-separable denoiser $\widetilde{\eta}_{n}^t(X) = \eta_t({X}, \widetilde{X})$ applies the AMP-SI denoiser defined in \eqref{eq:eta_2} entrywise to its vector inputs. From $\textbf{(A4)}$, $\{\eta_t(\cdot, \cdot)\}_{t \geq 0}$ are Lipschitz continuous. Thus, for length-$n$ vectors $x_1, x_2$, and fixed SI $\widetilde{x}$,
\begin{equation*}
\begin{split}
&||\widetilde{\eta}_{n}^t({x_1}) - \widetilde{\eta}_{n}^t({x_2})||^2 = \sum_{i=1}^n (\eta_t([x_1]_i, \widetilde{x}_i) - \eta_t([x_2]_i, \widetilde{x}_i))^2 \\
& \leq  \sum_{i=1}^n L^2 ([x_1]_i - [x_2]_i)^2 =  L^2 ||x_1 - x_2||^2,
\end{split}
\end{equation*}
and so
\[||\widetilde{\eta}_{n}^t({x_1}) - \widetilde{\eta}_{n}^t({x_2})|| \leq  L ||x_1 - x_2||.\]
The Lipschitz constant does not depend on $n$, so $\widetilde{\eta}_{n}^t({\cdot})$ is uniformly Lipschitz.

Now consider assumption $\textbf{(C4)}$. From $\textbf{(A2)}$, the measurement noise $\w$ in~\eqref{eq:1-1} has i.i.d.\  entries with zero-mean and finite $\mathbb{E}[|W|^2]$. Then applying Definition~\ref{thm:LLN}, 
\begin{equation*}
\lim_{m\to\infty}\frac{||\w||_2^2}{m}=\lim_{m\to\infty}\frac{1}{m}\sum_{i=1}^mw_i^2
=\sigma_w^2 < \infty,
\end{equation*}
where we have used that $\sigma_w^2 < \infty$ follows from $\mathbb{E}[|W|^2] < \infty$.
The proof of \textbf{(C3)} similarly follows using  the SLLN and the finiteness of $\mathbb E[|X|^2]$ given in assumption \textbf{(A3)}.

We now show that $\textbf{(C5)}$ is met. Recall $Z\sim \mathcal{N}( 0,\sigma_z^2\mathbb I_n)$. Define $y_i :=x_i\mathbb{E}_{Z} \left[ \eta_t(x_i + Z_{i}, \widetilde{x}_i)\right]$ for $i = 1, 2, \ldots, n$. By assumption $\textbf{(A3)}$, the signal and side information $(\X, \widetilde{\X})$ are sampled i.i.d.\ from the joint density $f(X, \widetilde{X})$. It follows that $y_1, y_2, \ldots, y_n$ are also i.i.d., so by Definition~\ref{thm:LLN} if $\mathbb{E}[X\eta_t(X + Z,\widetilde X)] < \infty$ where $Z \sim \mathcal{N}(0,\sigma_z^2)$  independent of $(X, \widetilde{X}) \sim f(X, \widetilde{X})$, then
\begin{equation*}
\begin{aligned}
\lim_{n\to\infty}\frac{1}{n}\sum_{i=1}^n x_i\mathbb{E}_{Z} \left[\eta_t(x_i+Z_{i},\widetilde x_i)\right] =  \mathbb{E}[X\eta_t(X + Z,\widetilde X)].
\end{aligned}
\end{equation*}
We now show that $\mathbb{E}[X\eta_t(X + Z,\widetilde X)] < \infty$.

First note that $\textbf{(A4)}$ assumes $\eta_t(\cdot, \cdot)$ is Lipschitz, meaning for scalars $a_1, a_2, b_1, b_2$ and some constant $L > 0$,
\begin{equation*}
\begin{aligned}
|\eta_t(a_1, b_1) - \eta_t(a_2, b_2)| \leq L ||(a_1, b_1) - (a_2, b_2)|| \\\leq L |a_1 - a_2| + L |b_1 - b_2|.
\end{aligned}
\end{equation*}
Therefore letting $a_2 = b_2 = 0$ we have
\begin{equation*}
|\eta_t(a_1, b_1)| - |\eta_t(0, 0)| \leq |\eta_t(a_1, b_1) - \eta_t(0, 0)| \leq L |a_1| + L |b_1|,
\end{equation*}
giving the follows  upper bound for constant $L' > 0$,
\begin{equation}
  |\eta_t(a_1, b_1)|  \leq L'(1 + |a_1| + |b_1|).
\label{eq:eta_bound}
\end{equation}
Now using \eqref{eq:eta_bound} and the triangle inequality,
\begin{equation}
\begin{aligned}
& \mathbb{E} [X \eta_t(X + Z, \widetilde{X})] \leq L'\mathbb{E} [|X|  (1+ |X+Z| + |\widetilde{X}|)] \\
 &\leq L'(  \mathbb{E} |X|  + \mathbb{E} [X^2] + \mathbb{E} |X|  \mathbb{E}|Z| + \mathbb{E} |X \widetilde{X}|) .
\label{eq:C5_bound2}
\end{aligned}
\end{equation}
Finally, by assumption $\textbf{(A3)}$ we have that $\mathbb{E}[|X|^2], \mathbb{E}[|\widetilde{X}|^2]$ and $\mathbb{E}|X \widetilde{X}|$ are all finite.  Then noting that for any random variable, $Y$, we have  $|Y|^r \leq 1 + |Y|^k$ for $1 \leq r \leq k$, meaning $\mathbb{E}[|Y|]^r < 1+ \mathbb{E}[|Y|^k]$ the boundednes of $ \mathbb{E} [X \eta_t(X + Z, \widetilde{X})]$ follows from \eqref{eq:C5_bound2} with assumption $\textbf{(A3)}$.

The proof of the second equation in $\textbf{(C5)}$ follows similarly to the proof of the first equation in $\textbf{(C5)}$.  Recall $(Z, Z')\sim N(0,\Sigma \otimes \mathbb I_n)$. Define $y_i :=\mathbb{E}_{Z,Z'}[\eta_t(x_i+ Z_i,\widetilde{x}_i)\eta_s(x_i+ Z'_i,\widetilde{x}_i)] $ for $i = 1, 2, \ldots, n$.  By assumption $\textbf{(A3)}$, the signal and side information $(\X, \widetilde{\X})$ are sampled i.i.d.\ from the joint density $f(X, \widetilde{X})$. It follows that $y_1, y_2, \ldots, y_n$ are also i.i.d., so by Definition~\ref{thm:LLN} if $\mathbb{E} [\eta_t(X+Z,\widetilde X)\eta_s(X+Z',\widetilde X)] < \infty$ where $Z \sim \mathcal{N}(0,\sigma_z^2)$ and $Z' \sim \mathcal{N}(0,\sigma_{z'}^2)$, independent of $(X, \widetilde{X}) \sim f(X, \widetilde{X})$, then
\begin{equation*}
\begin{aligned}
\lim_{n\to\infty}\frac{1}{n} \sum_{i=1}^n \mathbb{E}_{Z,Z'}[\eta_t(x_i+ Z_i,\widetilde{x}_i)\eta_s(x_i+ Z'_i,\widetilde{x}_i)]\\ = \mathbb{E}[\eta_t(X+Z,\widetilde X)\eta_s(X+Z',\widetilde X)].
\end{aligned}
\end{equation*}
We will now show that $\mathbb{E}[\eta_t(X+Z,\widetilde X)\eta_s(X+Z',\widetilde X)] < \infty$.
Using the bound \eqref{eq:eta_bound},
\begin{equation*}
\begin{aligned}
&\mathbb{E}[\eta_t(X+Z,\widetilde X)\eta_s(X+Z',\widetilde X)] \\& \leq \mathbb{E}[|\eta_t(X+Z,\widetilde X)| |\eta_s(X+Z',\widetilde X)|] \\&
\leq {L'}^2\mathbb{E} [(1+ |X+Z| + |\widetilde{X}|)  (1+ |X+Z'| + |\widetilde{X}|)].
\label{eq:C6_bound1}
\end{aligned}
\end{equation*}
Then using the triangle inequality,
\begin{equation}
\begin{aligned}
&\mathbb{E} \left[(1+ |X+Z| + |\widetilde{X}|)  (1+ |X+Z'| + |\widetilde{X}|)\right] \\&\leq \mathbb{E} \left[(1+ |X| + |Z| + |\widetilde{X}|)  (1+ |X| + |Z'| + |\widetilde{X}|)\right] \\
&=   1+  2\mathbb{E} \left[|X|\right]  + 2 \mathbb{E} [|\widetilde{X}|]  + 2\mathbb{E} \left[|X| |\widetilde{X}|\right]  + \mathbb{E} [ X^2] + \mathbb{E} [\widetilde{X}^2] \\
& \qquad+ \mathbb{E} [|X||Z'|]  + \mathbb{E} [|X||Z|] +\mathbb{E} [ |\widetilde{X}||Z'| ] + \mathbb{E} [ |\widetilde{X}| |Z|  ] \\&  \qquad+  \mathbb{E} |Z|   + \mathbb{E} [|Z'|]   + \mathbb{E} \left[|Z| |Z'|\right]  \\
&=  1+  2\mathbb{E} [|X|]  + 2 \mathbb{E} [|\widetilde{X}|]  + 2\mathbb{E} \left[|X \widetilde{X}|\right]  + \mathbb{E} [ X^2] + \mathbb{E} [\widetilde{X}^2] \\
&  \qquad+  \mathbb{E} [|Z'|](1 + \mathbb{E} [|X|]+ \mathbb{E}  [|\widetilde{X}|] )  +  \mathbb{E} [|Z|] ( 1 \\& \qquad+ \mathbb{E} [|X|] + \mathbb{E}  [|\widetilde{X}|] )    + \mathbb{E} \left[|ZZ'|\right] .
\label{eq:C6_bound2} 
\end{aligned}
\end{equation}

\subsection{Step 3}
Now that we've justified $\textbf{(C1)} - \textbf{(C5)}$, we make an appeal to Theorem~\ref{thm:Berthier} and the SLLN in order to finally prove \eqref{eq:main}.
The first result in \eqref{eq:main}, namely the asymptotic result for $\phi_m$ uniformly PL(2), follows \emph{almost} immediately by applying Theorem~\ref{thm:Berthier} using $ \rho_m = \phi_m$. Namely, by Theorem~\ref{thm:Berthier}, 
$$\lim_m ( \phi_m({\ramp}^t, {\w}) -  \mathbb{E}_{{\Z_1}}[\phi_m({\w} + \sqrt{\lambda_t^2 - \sigma_w^2} \, {\Z_1}, {\w})])\overset{p}{=}0$$
since $\phi_m$ is assumed to be uniformly PL(2).  To complete the proof, we will finally prove that
\begin{equation}
\begin{split}
&\lim_m  \mathbb{E}_{{\Z_1}}[\phi_m({\w} + \sqrt{\lambda_t^2 - \sigma_w^2} \, {\Z_1}, {\w})] \\
&\qquad = \lim_m  \frac{1}{m} \sum_{i=1}^m\mathbb{E}_{{\Z_1}}[\phi({w}_i + \sqrt{\lambda_t^2 - \sigma_w^2} \, {[\Z_1]_i}, {w_i})] \\
&\qquad \overset{a.s.}{=} \mathbb{E}[\phi(W + \sqrt{\lambda_t^2 - \sigma_w^2} \,  {Z_1}, W)],
\label{eq:limit_num1}
 \end{split}
 \end{equation}
where $W \sim f(W)$ independent of $Z_1$ standard Gaussian.
Then the desired result follows since
$$ \mathbb{E}[\phi_m({\mathbf{W}} + \sqrt{\lambda_t^2 - \sigma_w^2} \, {\Z_1}, {\mathbf{W}})] = \mathbb{E}[\phi(W + \sqrt{\lambda_t^2 - \sigma_w^2} \,  {Z_1}, W)].$$
The result follows by the SLLN (Definition~\ref{thm:LLN}) so long as $\mathbb{E}[\phi(W + \sqrt{\lambda_t^2 - \sigma_w^2} \,  {Z_1}, W)]$ is finite.
By Definition \ref{def:PLfunc} it is easy to see that if $\phi: \mathbb{R}^2 \rightarrow \mathbb{R}$ is PL(2), then there is a constant $L' > 0$ such that for all $\mathbf{x}\in\mathbb{R}^2$ : $|\phi(\x)|\leq L'(1+||\x||^2).$ Using this,
\begin{equation}
\begin{aligned}
\label{eq:W_equation}
&\lvert \phi(W + \sqrt{\lambda_t^2 - \sigma_w^2} \,  {Z_1}, W)\lvert 
\leq  L_1'(1+ \lvert  \lvert (W + \sqrt{\lambda_t^2 - \sigma_w^2} \,  {Z_1}, W)   \lvert  \lvert^2)
\\ &\leq  L_1'(1+3 |W|^2 + 2 (\lambda_t^2 - \sigma_w^2)  |Z_1|^2),
\end{aligned}
\end{equation}
where we have used: for any $r >0$ and any $a_1, a_2$ scalars, $ \lvert \lvert (a_1, a_2)   \lvert  \lvert^2 =  a_1^2 + a_2^2$ and $(|a_1| +|a_2|)^r \leq 2^{r-1}(|a_1|^r +|a_2|^r)$. Thus,
\begin{align*} 
\Big\lvert \Big\lvert \Big(W + \sqrt{\lambda_t^2 - \sigma_w^2} \,  {Z_1}, W\Big) \Big  \lvert \Big \lvert^2 &= \Big(W + \sqrt{\lambda_t^2 - \sigma_w^2} \, {Z_1}\Big)^2 + W^2 \\
&\leq 3W^2 + 2(\lambda_t^2 - \sigma_w^2)  Z_1^2.
\end{align*}
Similarly, we have the upper bound,
\begin{equation}
\begin{aligned}
\Big |\psi(X + &\lambda_t {Z_2}, X, \widetilde{X})\Big| \leq  L_2'\left(1+\Big \lvert \Big \lvert(X + \lambda_t {Z_2}, X, \widetilde{X}) \Big \lvert \Big \lvert^2\right)\\& \leq L_2'\Big(1+3|X|^2 +  2\lambda_t^2 {|Z_2|}^2  +  |\widetilde{X}|^2 \Big).
 \label{eq:X_equation}
\end{aligned}
\end{equation}
Therefore, using \eqref{eq:X_equation}, and the boundedness of $\mathbb{E}[|X|^2]$ and $\mathbb{E}[|\widetilde{X}|^2]$ assumed in $\textbf{(A3)}$,
\begin{align*} 
&\mathbb{E}[\psi(X + \lambda_t {Z_2}, X, \widetilde{X})] \\&\leq L_2'\Big(1+3\mathbb{E}[|X|^2] +  2\lambda^2_t \mathbb{E}[{|Z_2|}^2]  +  \mathbb{E}[|\widetilde{X}|^2] \Big) < \infty.
\end{align*}

The second result of \eqref{eq:main} requires a bit more care as it is not immediate that the function $\gamma_n: \mathbb{R}^{2n} \rightarrow \mathbb{R}$ defined as $\gamma_n(\mathbf{a}, \mathbf{b}) := \psi_n(\mathbf{a}, \mathbf{b}, \widetilde{\x})
\label{eq:gamma_def}$ for a sequence of side informations $\{\widetilde{\x}\}_n$ is uniformly PL(2) as needed to apply Theorem~\ref{thm:Berthier}. The next step of the proof
deals with carefully handling this issue. We note that once we have shown that
\begin{equation}
\label{eq:to_show1}
\lim_n ( \psi_n(\x^t + {\A}^T {\ramp}^t,\x, \widetilde{\x}) -  \mathbb{E}_{{\Z_2}}[\psi_n({\x} + \lambda_t {\Z_2}, {\x}, \widetilde{\x}])\overset{p}{=}0,
\end{equation}
then the last step showing that
\begin{align*}
&\lim_n  \mathbb{E}_{{\Z_2}}[\psi_n({\x} + \lambda_t {\Z_2}, {\x}, \widetilde{\x})] \overset{p}{=} \lim_n  \mathbb{E}[\psi_n({\mathbf{X}} + \lambda_t {\Z_2}, {\mathbf{X}},\widetilde{\mathbf{X}})],
 \end{align*}
follows by the SLLN as in \eqref{eq:limit_num1} - \eqref{eq:W_equation}.  However, the function $\gamma_n$ is not obviously uniformly PL(2)  since an upper bound on $|\psi_n(\mathbf{a}, \widetilde{\mathbf{a}}, \widetilde{\x}) - \psi_n(\mathbf{b}, \widetilde{\mathbf{b}}, \widetilde{\x})|$ necessarily has an $||\widetilde{\x}||/\sqrt{n}$ factor.  This is mainly a technicality as $||\widetilde{\x}||/\sqrt{n}$ is bounded by a constant (independent of $n$) with high probability. 

To show \eqref{eq:to_show1} we would like to show that for any $\epsilon > 0$, 
\begin{equation}
P(\lvert  \psi_n(\x^t + {\A}^T {\ramp}^t,\x, \widetilde{\x}) -  \mathbb{E}_{{\Z_2}}[\psi_n({\x} + \lambda_t {\Z_2}, {\x}, \widetilde{\x})]  \lvert > \epsilon) \rightarrow 0
\label{eq:conv_in_prob}
\end{equation}
as $n \rightarrow \infty$.  Define a pair of events $\mathcal{T}_n(\epsilon)$ and $\mathcal{B}_n(C)$ as 
\[\mathcal{T}_n(\epsilon) := \{\lvert  \psi_n(\x^t + {\A}^T {\ramp}^t,\x, \widetilde{\x}) -  \mathbb{E}_{{\Z_2}}[\psi_n({\x} + \lambda_t {\Z_2}, {\x}, \widetilde{\x})]  \lvert > \epsilon \}\]
and for constant $C>0$ independent of $n$,
$\mathcal{B}_n(C) := \{\widetilde{\x} \in \mathbb{R}^n : ||\widetilde{\x}||/\sqrt{n} < C\}.$
Then demonstrating \eqref{eq:conv_in_prob} means showing, for any $\epsilon > 0
$, that $\lim_n P(\mathcal{T}_n(\epsilon)) = 0$.  Note that,
\begin{equation*}
\begin{split}
&P(\mathcal{T}_n(\epsilon)) =  P(\mathcal{T}_n(\epsilon) \text{ and } \mathcal{B}_n(C)) + P(\mathcal{T}_n(\epsilon)  \text{ and not } \mathcal{B}_n(C) ) \\
&\leq  P(\mathcal{T}_n(\epsilon) \lvert \mathcal{B}_n(C))+ P(\text{not } \mathcal{B}_n(C)).
\end{split}
\end{equation*}
Considering the above, the first term approaches $0$ as $n$ gets large due to Theorem~\ref{thm:Berthier}, since one can argue
$P(\mathcal{T}_n(\epsilon) \lvert \mathcal{B}_n(C)) = P(\mathcal{T}_n(\epsilon) \lvert \mathcal{B}_p(C) \text{ for all }  p > p_0 )$ and conditional on the event $\mathcal{B}_p(C)$ being true for all integers $p > p_0$ (constant $p_0>0$), 
the function $\gamma_n$ defined in \eqref{eq:gamma_def} is uniformly PL(2) in $n$.  This uses that $\widetilde{\x}(n)$ is independent of the other random elements, namely $\A(n)$ and $\w(n)$.  Next, by choosing $C$ large enough, the second probability $P(\text{not } \mathcal{B}_n(C))$ goes to zero almost surely by the SLLN as $ ||\widetilde{\x}||/\sqrt{n}$ concentrates to the elementwise expectation of $\widetilde{\x}$.

\section*{Acknowledgment}
We thank You (Joe) Zhou for insightful conversations and valuable advice. Liu and Baron acknowledge support from NSF EECS $\#1611112$ and Rush from NSF $\#1217023$.

\bibliographystyle{IEEEtran}
\bibliography{bibliography}

\end{document}